\newtheorem{theo}{Theorem}
\newtheorem{prop}{Proposition}
\newcommand{\ol}[1]{\overline{#1}}
\newcommand{\ul}[1]{\underline{#1}}
\newcommand{\cl}[1]{\mathcal{#1}}
\title{{\LARGE A micro-founded comparison of fiscal policies between indirect and direct  job creation}}
\author{Kensuke Ohtake\thanks{Center for General Education, Shinshu University, Matsumoto, Nagano 390-8621, Japan,
E-mail: k\_ohtake@shinshu-u.ac.jp}}
\date{February 28, 2024}
\begin{document}
\maketitle

\begin{abstract}
The purpose of this paper is to provide a micro-economic foundation for an argument that the direct employment by the government is more desirable than the government purchase of private goods to eliminate unemployment. A general equilibrium model with monopolistic competition is devised, and the effects of policies (government purchase, tax rate operation, and government employment) on macroeconomic variables (consumption, price, and profit) are investigated. It is shown that 1) the government purchase is inflationary in the sense that additional effective demand by the government not only increases private employment but also raises prices; 2) the government employment can achieve full employment without causing a rise in prices. 
\end{abstract} 

\noindent
{\bf Keywords:} Fiscal policy; Inflation; Government employment; Employer of last resort; Minskyan model; Monopolistic competition; General equilibrium

\noindent
{\small {\bf JEL classification:} D50, E12, E31, E62}

\section{Introduction}
We consider two different forms of fiscal policy. One is a policy in which the government purchases private goods with the intention of securing corporate profits and increasing employment in the private sector by expanding production. The other is a policy in which the government directly employs workers in order to eliminate unemployment. Let us refer to the former as ``government purchase'' and the latter as ``government employment'' for simplicity. These two policies are often blurred by being treated just as ``government expenditure'', and not distinguished even in policy debates, let alone in textbook economics. Actually, much literature seems to imply the government purchase by the government expenditure.\footnote{We must mention several works in the literature that distinguish between the government purchase and government employment. For example, \citet{pall18govspe} constructs Keynesian and Kaleckian models, discusses multiplier effects and distributional aspects, and addresses political economic concerns related to Job Guarantee Program (JGP). Although away from the context of the present paper, several DSGE models that incorporate the government employment have been developed, such as \citet{cav05gov}, \citet{eguhira2009}, \citet{ForMonSe09}, \citet{StTh12Fimod}, \citet{BuyHey2014}, \citet{ChaLiTraYa21}, \citet{Wang21}, and \citet{GerWa22}.}

Are differences between these two fiscal policies insignificant? \citet{Minsky1968} and some Minskyan authors have argued that the differences should not be ignored.\footnote{For an overview of Minsky's arguments on employment policy, see, for example, \citet{bewr2004} and \citet{wray2015minskybook}.}  Their arguments might be summarized as follows: 1) Pump-priming government purchase intended to stimulate the economy is insufficient to achieve full employment, and instead, it destabilizes the economy through inflation, rising inequality, and weakened financial structure. 2) On the other hand, government employment\footnote{\citet{Minsky1968} specifically calls the government employment policy ``Employer of Last Resort (ELR)''. Recent Minskyan authors, develop the concept of ELR and call it ``Job Guarantee Program (JGP)'' as in \citet[Chapter 19]{MitWraWatt2019book}. These concepts have carefully designed contents in Minskyan thought, but this paper does not discuss detailed contents of the government employment.} policies can defeat these economic destabilizing factors as well as achieve full employment directly.

As a mathematical expression of such Minskyan arguments, \citet{Tcher2012} (\citet{Tcher2014}) provides an enlightening model regarding the comparison of several types of fiscal policies: unemployment insurance, investment subsidy, government purchase, and government employment, the first three and the fourth of which are referred to as ``indirect job creation'', and ``direct job creation'', respectively. \citet[p.142]{Tcher2014} concludes that the latter is desirable mainly in  that it contributes to stabilizing the price-cost markup, stating that
\begin{quote}
{\it With ELR,\footnote{Note by the author: Employer of Last Resort. See also Footnote 2.} full employment is guaranteed, whereby the inflationary influences on the markup from government are controlled.}
\end{quote}
and \citet[p.139]{Tcher2014} also states that
\begin{quote}
{\it There are, of course, other government and non-government factors that can
produce inflation, but inflation it must be emphasized does not result from the ELR program.}
\end{quote}
Tcherneva's model is a clever application of macroeconomic accounting identities. In addition to such macro economic discussions, the author of the present paper believes that understanding how the macro economic result can be explained from micro economic behavior of firms and consumers may help to further develop the Minskyan arguments.

In this paper, we simplify one of the Minskyan claims on the desirability of the government employment as {\it the government employment is less inflationary than the government purchase},\footnote{\citet{Tcher2014} does not necessarily make such a simple assertion. In Tcherneva's model, the private economy consists of two sectors, i.e., consumption goods and investment goods, so the impact of government purchase on prices differs for each sector. \citet[Section 19.6]{MitWraWatt2019book} are more direct in their arguments with the conclusion that ``JG eliminates the Phillips curve''. Here, JG stands for Job Guarantee.} and give an analytical proof of this claim in a micro-founded general equilibrium model. An intuitive explanation of the results of this paper would be as follows. The government purchase creates a certain amount of effective demand for firms regardless of their pricing. This allows firms to price aggressively. On the other hand, effective demand from the workers in the government employment forces firms to compete on price, which generates downward price pressure.

Our model is a static general equilibrium model with monopolistic competition\footnote{The model introduced by \citet{BlaKiyo87} is a standard static macroeconomic model based on monopolistic competition.} based on one by \citet{BehMura2007}, but extended in a non-Walrasian way. In our model, the utility function of each consumer is the constant absolute risk aversion (CARA) type, and therefore price-cost markup of each firm is not constant, unlike a model with the constant elasticity of substitution (CES) utility function.\footnote{\citet{BehMura2007} investigate differences between the constant relative risk aversion utility (including CES) model and CARA utility model} The purpose of adopting the CARA model is to develop discussion in a more general situation where price-cost markup is not constant because one of the main subjects of this paper is particularly about the impacts of policies on price.\footnote{\citet[Section 7.4]{Ike06} have also introduced a non-CES utility function, leading to a model with a non-constant price-cost markup.}

The term ``non-Walrasian'' is used in a somewhat simplified sense in this paper.\footnote{The standard references on non-Walrasian macroeconomics are, for example, \citet{Clo65}, \citet{Leij68}, \citet{BarGro71}, \citet{BarGro76}, and \citet{Bena86}.} There, the focus is not on any market disequilibrium in general, but on cases of insufficient demand for labor.\footnote{A more general description of the non-Walrasian approach is given in \citet[Chapter 2]{Bena86}. Similar to the present paper, \citet[pp.52-59]{Tana10} proposes an AD-AS model with the simplified non-Walrasian approach that focuses on the case of labor oversupply. Unlike the model of the present paper, in the one by \citet[pp.52-59]{Tana10} the equilibrium output increases as the nominal wage falls.} Trades take place under the current nominal wage level without waiting for the labor market reaches full employment. This might lead one to believe that wages are fixed even in the long run as in the textbook IS-LM model. On the contrary, the model in this paper allows wages to change in the long run. We use the non-Walrasian approach to express simply the fact that the real labor (and goods) markets do not operate like a well-organized stock market. In this way, the demand for goods from privately employed workers plus that from governmentally employed workers and that from the government constitute effective demand, and firms price optimally under the constraints of that effective demand.

The rest of the paper is organized as follows. Section 2 introduces the model that we handle. Section 3 derives the equilibrium of the model and discusses how it depends on the policies. Section 4 provides conclusions and discussions. Section 5 deals with complicated proofs omitted in the main text.

\section{The model}\label{sec:model}
This section introduces the model. First, the settings and assumptions of the model are specified, and then consumer and producer behavior are modeled.

\subsection{Settings and assumptions}
The economy consists of two sectors; one is private and the other is governmental. There are so many differentiated varieties of firms and goods in the private sector that we consider them to be modeled as a continuum. Each firm is monopolistically competitive and it produces only one variety of differentiated goods, hence the number of the firms is equal to the number of the varieties. Each firm (and variety) is indexed by $i\in [0, N]$ where $N\in \mathbb{R}$ is the size of the set of the firms (and varieties). Here, the size $N$ is assumed to be exogenous. We use the term {\it goods} exclusively in the sense of the privately produced goods because we assume that governmentally produced goods (or services) are not traded in the market and are not objects of consumer's choice.

Workers are also divided into two categories: privately employed workers and governmentally employed ones. All the workers earn the same nominal wage $w\geq 0$ regardless of whether they are in the private or governmental employment. Taxes are also levied evenly on all workers, so their after-tax incomes are also equal.

Consumers consist only of the workers, therefore,  positive profits are retained as financial assets of the firms and are not given to any expenditure. In addition, all the consumers have the same preference relation. Therefore, since their incomes are also equal, the consumption of each variety is also equal for all consumers.

In equilibrium, it is assumed that unemployment occurs in the labor market. On the other hand, the goods market always balances in the sense that the firms produce only as much as the effective demand given to them.

\subsection{Consumer behavior}
Let $u$ be a partial utility function for any variety $i\in[0, N]$ of the goods. For a given price $p(i)\geq 0$ of variety $i$, income $y\geq 0$, and tax $t\geq 0$, each consumer maximizes the following utility function $U$
\begin{equation}\label{utility}
U = \int_0^N u(q(i))di,
\end{equation}
under the budget constraint
\begin{equation}\label{budget}
\int_0^N p(i)q(i)di + t = y.
\end{equation}
The Lagrangian of the maximization problem for \eqref{utility} under \eqref{budget} is given by
\[
\cl{L} = \int_0^N u(q(i))di -\lambda\left[\int_0^N p(i)q(i)di + t - y\right],
\]
where $\lambda$ is the Lagrange multiplier. The first-order condition for the optimality is then given by
\begin{equation}\label{fo}
\frac{u^\prime(q(i))}{p(i)} = \frac{u^\prime(q(j))}{p(j)} = \lambda,~~~\forall i,j \in [0,N],
\end{equation}
From \eqref{fo}, we see that
\begin{equation}\label{uprimeqiuprimepipj}
u^\prime(q(i)) = u^\prime(q(j))\frac{p(i)}{p(j)}.
\end{equation}
In the following, we consider the partial utility function $u$ of the CARA type defined by
\[
u(q) := k-\kappa e^{-\alpha q},
\]
where $k\in \mathbb{R}$, $\kappa>0$, and $\alpha>0$. In this case, as in \citet[pp.780-781]{BehMura2007}, the equation \eqref{uprimeqiuprimepipj} can be solved for $q(i)$ as 
\begin{equation}\label{qiqi1alppipj}
q(i) = q(j) -\frac{1}{\alpha}\ln \left[\frac{p(i)}{p(j)}\right].
\end{equation}
By multiplying both sides of the equation \eqref{qiqi1alppipj} by $p(i)$ and integrating over $[0,N]$, from $\int_0^N p(i)q(i)di=y-t$, we get 
\[
y-t = q(j)\int_0^N p(i)di - \frac{1}{\alpha}\int_0^N \ln\left[\frac{p(i)}{p(j)}\right]p(i)di,
\]
which immediately yields the demand function\footnote{This corresponds to \citet[p.782, (13)]{BehMura2007}.}
\begin{equation}\label{dfunc}
q(j) = \frac{y-t+\frac{1}{\alpha}\int_0^N \ln\left[\frac{p(i)}{p(j)}\right]p(i)di}{\int_0^N p(i)di}.
\end{equation}

\subsection{Producer behavior}
Let $m>0$ and $F\geq 0$ be the marginal input and fixed input of workers, respectively, to produce one unit of goods. That is, when each firm produces $q\geq 0$ of a variety, the required labor input $l$ is given by
\[
l = F + mq.
\]
Each worker earns an arbitrarily given nominal wage $w\geq 0$. For the number of  
privately employed workers $L\geq 0$, the number of governmentally employed workers $L_g\geq 0$, and the quantity of government purchase $g(i)\geq 0$ for variety $i$, the profit of firm $i$ is given by
\begin{equation}\label{defprofit}
\Pi(i) = \left(Lq(i)+L_gq(i)+g(i)\right)\left[p(i)-mw\right] - Fw.
\end{equation}
By differentiating $q(i)$ by $p(i)$, we see from \eqref{dfunc} that
\begin{equation}\label{dqidpi}
\begin{aligned}
\frac{dq(i)}{dp(i)} &= -\frac{1}{\alpha p(i)}.
\end{aligned}
\end{equation}
From \eqref{dfunc}, \eqref{defprofit}, and \eqref{dqidpi}, we obtain the first order condition $\frac{d\Pi(i)}{dp(i)}=0$ as
\begin{equation}\label{fofirm}
\begin{aligned}
&\int_0^N \ln\left[\frac{p(i)}{p(j)}\right]p(j)dj + \frac{p(i)-mw}{p(i)}\int_0^N p(j)dj \\
&\hspace{33mm} = \alpha(y-t) + \frac{\alpha g(i)}{L+L_g} \int_0^N p(j)dj.
\end{aligned}
\end{equation}
for the maximal profit.\footnote{This corresponds to \citet[p.783, (17)]{BehMura2007}.}

\section{Equilibrium analysis}
This section discusses basic properties of an equilibrium called the symmetric equilibrium and then investigates how the equilibrium responds to each of three policy variables: the government purchase, tax rate, and government employment.

\subsection{Basic properties of equilibrium}
We consider an equilibrium often referred to as {\it symmetric equilibrium} in general equilibrium models.\footnote{For example \citet{BlaKiyo87}, \citet{BehMura2007}, \citet{Ike06}, or \citet{Tana10}.} That is, $p(i)\equiv p$ and $q(i)\equiv q$ for all $i\in [0,N]$. The following proposition provides justification for considering only the symmetric equilibrium in the present model.\footnote{The proposition corresponds to \citet[Proposition 2]{BehMura2007}, and the proof is also given in the same manner.}
\begin{prop}\label{jussym}
Assume $g(i)\equiv g,~\forall i\in[0,N]$, then $p(i)\equiv p,~\forall i\in[0, N]$.
\end{prop}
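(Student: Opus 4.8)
The plan is to read off the conclusion directly from the firm's first-order condition \eqref{fofirm}. Under the hypothesis $g(i)\equiv g$ the entire right-hand side of \eqref{fofirm}, namely $\alpha(y-t)+\frac{\alpha g}{L+L_g}\int_0^N p(k)\,dk$, no longer depends on the firm index, so the left-hand side must take one and the same value for every firm. I would therefore pick two arbitrary firms $i,j\in[0,N]$, write \eqref{fofirm} for each (using $k$ as the integration dummy), and subtract. The logarithmic integrals combine cleanly, since
\[
\int_0^N \ln\!\left[\frac{p(i)}{p(k)}\right]p(k)\,dk - \int_0^N \ln\!\left[\frac{p(j)}{p(k)}\right]p(k)\,dk = \ln\!\left[\frac{p(i)}{p(j)}\right]\int_0^N p(k)\,dk,
\]
while the markup terms contribute $mw\bigl(\tfrac{1}{p(j)}-\tfrac{1}{p(i)}\bigr)\int_0^N p(k)\,dk$. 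Dividing through by the positive quantity $\int_0^N p(k)\,dk$, the difference of the two first-order conditions collapses to
\[
\ln\!\left[\frac{p(i)}{p(j)}\right] + mw\left(\frac{1}{p(j)} - \frac{1}{p(i)}\right) = 0.
\]

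The second step is to recognise this as the identity $\varphi(p(i))=\varphi(p(j))$ for the single-variable function $\varphi(x):=\ln x - \frac{mw}{x}$ on $(0,\infty)$. Since $\varphi'(x)=\frac{1}{x}+\frac{mw}{x^2}>0$, the function $\varphi$ is strictly increasing, hence injective, and therefore $p(i)=p(j)$. As $i$ and $j$ were arbitrary, all prices coincide, which is the claim $p(i)\equiv p$. This is essentially the monotonicity argument of \citet{BehMura2007}; the only new ingredient is the government-purchase term, which is harmless precisely because the hypothesis $g(i)\equiv g$ makes it constant across firms and it drops out in the subtraction.

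The only point needing a word of justification is the positivity used twice above: that $\int_0^N p(k)\,dk>0$ (so we may divide by it) and that each $p(i)>0$ (so that $\ln p(i)$ and $1/p(i)$, and hence $\varphi$, make sense). I expect this to be a non-issue rather than a real obstacle: the derivation of \eqref{fofirm} itself already presupposes strictly positive prices (it contains $\ln p(i)$ and $1/p(i)$), so I would simply record $p(i)>0$ for all $i$ as the standing assumption on admissible price profiles and proceed. No other delicate estimates are involved.
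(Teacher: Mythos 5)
Your proof is correct and follows essentially the same route as the paper: subtract the first-order condition \eqref{fofirm} for two firms, observe that the logarithmic integrals collapse to $\ln[p(i)/p(j)]\int_0^N p(k)\,dk$, and conclude from strict monotonicity of $x\mapsto \ln x - mw/x$. The only cosmetic difference is that the paper applies this to the pair $(\max_i p(i),\min_i p(i))$ and derives a contradiction, whereas you compare an arbitrary pair and invoke injectivity directly --- the same algebra and the same monotonicity fact.
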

\begin{proof}
Let us define $\overline{p}:=\max_i p(i)$ and $\underline{p}:=\min_i p(i)$. The first order condition \eqref{fofirm} is valid for all the firms in common, therefore,
\[
\begin{aligned}
&\int_0^N \ln\left[\frac{\ol{p}}{p(j)}\right]p(j)dj + \frac{\ol{p}-mw}{\ol{p}}\int_0^N p(j)dj =  \alpha(y-t) + \frac{\alpha g}{L+L_g} \int_0^N p(j)dj\\
&\int_0^N \ln\left[\frac{\ul{p}}{p(j)}\right]p(j)dj + \frac{\ul{p}-mw}{\ul{p}}\int_0^N p(j)dj = \alpha(y-t) + \frac{\alpha g}{L+L_g} \int_0^N p(j)dj
\end{aligned}
\]
hold. Subtracting the lower equation from the upper one, we obtain
\begin{equation}
\left[\ln\left(\frac{\ol{p}}{\ul{p}}\right)+\frac{mw(\ol{p}-\ul{p})}{\ol{p}\ul{p}}\right]\int_0^N p(j)dj=0,
\end{equation}
which is impossible if $\ol{p}>\ul{p}$. Therefore, $\ol{p}=\ul{p}$ ($\equiv p$).
\end{proof}
In the following, we assume that the after-tax income $y-t$ is the same for all consumers and $g(i)\equiv g$ for any $i\in[0, N]$. Then, we have the symmetric equilibrium because $q(i)\equiv q$ for any $i\in[0, N]$ by Proposition \ref{jussym} and the demand function \eqref{dfunc}.

It follows from \eqref{fofirm} that the equilibrium price $p$ satisfies
\begin{equation}\label{priceeqpmw}
\frac{p-mw}{p}Np = \alpha(y-t) + \frac{\alpha Npg}{L+L_g}.
\end{equation}
Solving \eqref{priceeqpmw} for $p$ yields
\begin{align}
p &= \frac{L+L_g}{L+L_g-\alpha g}\left[mw+\frac{\alpha(y-t)}{N}\right]. \label{p1}
\end{align}
Furthermore, substituting $y-t=\int_0^Np(i)q(i)di=Npq$ into \eqref{p1}, and solving for $p$, we obtain another expression for the equilibrium price as
\begin{align}
p &= \frac{mw\left(L+L_g\right)}{L+L_g-\alpha\left(\left(L+L_g\right)q+g\right)}.  \label{p2}
\end{align}

The following proposition states that a large size of the set of the firms and the varieties of the goods or a large marginal input guarantees a positive equilibrium price.
\begin{prop}\label{prop:LalpLqgplus}
If 
\begin{equation}\label{Nmbtalp}
Nm>\alpha,
\end{equation}
then 
\[
L+L_g - \alpha\left(\left(L+L_g\right)q+g\right) > 0
\]
holds,\footnote{If $\left(L+L_g\right)-\alpha(\left(L+L_g\right)q+g)>0$ in \eqref{p2} then of course $\left(L+L_g\right)-\alpha g>0$ in \eqref{p1}.} and thus the price \eqref{p2} {\rm(}and \eqref{p1}{\rm)} is positive.
\end{prop}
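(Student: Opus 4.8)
The plan is to bring in the labor-market equilibrium condition, which is not written out explicitly above but is forced by the assumptions of Section~\ref{sec:model}. Since firm $i$ produces exactly the effective demand $(L+L_g)q(i)+g(i)$ addressed to it, and producing $Q$ units of goods requires $F+mQ$ units of labor, the total labor hired by the private sector is $\int_0^N[F+m((L+L_g)q(i)+g(i))]\,di$; because the ``privately employed workers'' are by definition the workers hired by the private firms, this integral must equal $L$. In the symmetric equilibrium it becomes
\[
L = NF + Nm\bigl((L+L_g)q+g\bigr),
\]
so that $(L+L_g)q+g=(L-NF)/(Nm)$ (in particular $L\ge NF$).

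Next I would substitute this identity into the quantity whose sign is at issue. Writing $D:=L+L_g-\alpha((L+L_g)q+g)$ for the denominator appearing in \eqref{p2}, the substitution gives
\[
D=L+L_g-\frac{\alpha(L-NF)}{Nm}=\frac{(Nm-\alpha)L+NmL_g+\alpha NF}{Nm}.
\]
Here I would invoke the hypothesis $Nm>\alpha$: since $L,L_g,F\ge 0$ and $N,m,\alpha>0$, each of the three summands in the numerator is nonnegative and the denominator $Nm$ is positive, so $D\ge 0$, with $D>0$ unless $L=L_g=F=0$ hold simultaneously. That last case — through the labor identity it also forces $g=0$ — is the trivial economy with no employment and no output; excluding it, equivalently assuming positive total employment $L+L_g>0$, yields $D>0$, so the price \eqref{p2} is positive, and then \eqref{p1} is positive as well by the footnote.

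The substitution and the sign bookkeeping are routine. The one step that carries the actual content, and the place where I would be careful, is making the labor-market accounting identity explicit: the firms' first-order condition \eqref{fofirm} by itself only ties $p$ to $q$ and $g$ and does not determine the sign of $D$, so without the relation $L=NF+Nm((L+L_g)q+g)$ there is nothing to work with. The only genuine subtlety is the degenerate all-zero boundary case, which should be flagged or ruled out by a standing assumption that the economy is nontrivial.
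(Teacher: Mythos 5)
Your proof is correct and rests on exactly the same key ingredient as the paper's: the labor-market accounting identity $L = N[m(L+L_g)q+mg+F]$, combined with the hypothesis $Nm>\alpha$; you merely substitute and regroup the terms where the paper compares $\alpha(\cdot)$ with $Nm(\cdot)$ directly. Your explicit handling of the degenerate case $L=L_g=F=0$ is in fact slightly more careful than the paper, which asserts $NF>0$ even though the model only assumes $F\geq 0$.
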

\begin{proof}
The total employment by the private firms is given by
\begin{equation*}
\int_0^N\left[m\left(L+L_g\right)q(i)+mg(i)+F\right]di,
\end{equation*}
which equals $L$. Therefore, in the symmetric equilibrium, we obtain
\begin{equation}\label{empeq}
L = N\left[m\left(L+L_g\right)q+mg+F\right].
\end{equation}
We see from \eqref{empeq} that
\begin{equation}\label{LNmLLgqgNF>0}
L-Nm(\left(L+L_g\right)q+g)=NF>0.
\end{equation}
Therefore, it follows from \eqref{Nmbtalp} and \eqref{LNmLLgqgNF>0} that
\[
\begin{aligned}
&L+L_g-\alpha\left(\left(L+L_g\right)q+g\right) \\
&\hspace{10mm} > L+L_g-Nm\left(\left(L+L_g\right)q+g\right) = NF+L_g >0.
\end{aligned}
\]
\end{proof}

The next proposition states that the equilibrium price is higher than the marginal cost.
\begin{prop}
Under \eqref{Nmbtalp}, 
\begin{equation}\label{p-mw>0}
p-mw > 0
\end{equation}
holds.
\end{prop}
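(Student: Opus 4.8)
The plan is to obtain $p-mw$ in closed form from the equilibrium price \eqref{p2} and then read off its sign using Proposition~\ref{prop:LalpLqgplus}. Set $D:=L+L_g-\alpha\big((L+L_g)q+g\big)$, which is precisely the quantity shown to be positive in Proposition~\ref{prop:LalpLqgplus} under the maintained hypothesis \eqref{Nmbtalp}. Since \eqref{p2} reads $p=mw(L+L_g)/D$, one computes
\[
p-mw \;=\; mw\!\left(\frac{L+L_g}{D}-1\right) \;=\; mw\cdot\frac{(L+L_g)-D}{D} \;=\; \frac{mw\,\alpha\big((L+L_g)q+g\big)}{D}.
\]
On the right-hand side $m>0$, $\alpha>0$, $D>0$ by Proposition~\ref{prop:LalpLqgplus}, and $(L+L_g)q+g\ge 0$; moreover $w>0$ (this is in any case implicit in Proposition~\ref{prop:LalpLqgplus}, since with $w=0$ the price \eqref{p2} would vanish rather than be positive). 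Hence $p-mw\ge 0$, and the inequality is strict exactly when $(L+L_g)q+g>0$.

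To finish, I would note that $(L+L_g)q+g>0$ holds in the economically relevant situation. Indeed, the labor-market identity \eqref{empeq} gives $(L+L_g)q+g=(L-NF)/(Nm)$, so $(L+L_g)q+g>0$ is equivalent to $L>NF$, i.e.\ private employment strictly exceeds the labor tied up in fixed costs --- which is automatic as soon as the firms produce a positive quantity or the government purchases a positive quantity. Equivalently, via the budget constraint \eqref{budget} one has $y-t=\int_0^N p(i)q(i)\,di$, which is strictly positive whenever after-tax income is positive, and this again forces strictness.

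As a consistency check, the same fact drops out of the equilibrium condition \eqref{priceeqpmw} directly: rearranging it yields $(p-mw)N=\alpha(y-t)+\alpha Npg/(L+L_g)$, that is $p-mw=\alpha(y-t)/N+\alpha pg/(L+L_g)$, a sum of terms that are nonnegative (and not all zero in a non-degenerate equilibrium) once $p>0$ is known from Proposition~\ref{prop:LalpLqgplus}.

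I do not anticipate any real difficulty here: the statement is essentially an algebraic corollary of \eqref{p2} together with the positivity already established in Proposition~\ref{prop:LalpLqgplus}. The only step that warrants a line of justification is the passage from ``$\ge$'' to ``$>$'', i.e.\ excluding the degenerate configuration with zero output and zero government purchase; as indicated, \eqref{empeq} (or \eqref{budget}) handles this, and that configuration is in any event ruled out by the standing assumption that the economy is producing.
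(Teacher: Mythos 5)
Your proof is correct and follows essentially the same route as the paper: both rewrite \eqref{p2} to obtain $p-mw=\alpha m w\left((L+L_g)q+g\right)/\left[L+L_g-\alpha\left((L+L_g)q+g\right)\right]$ and invoke Proposition~\ref{prop:LalpLqgplus} for the sign of the denominator. Your additional care in ruling out the degenerate case $(L+L_g)q+g=0$ (via \eqref{empeq}) and in noting that $w>0$ is needed is a welcome refinement of a point the paper's one-line proof passes over silently, but it is not a different argument.
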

\begin{proof}
From \eqref{p2} and Proposition \ref{prop:LalpLqgplus}, we see that
\begin{equation}
p-mw = \frac{\alpha mw\left((L+L_g)q+g\right)}{L+L_g-\alpha\left(\left(L+L_g\right)q+g\right)} > 0. \label{pmwresult}
\end{equation}
\end{proof}

We incorporate the proportional income tax $t=\tau w,~\tau\in(0,1)$, hence the after-tax income of each consumer is
\begin{equation}\label{uniformincome}
y-t=(1-\tau) w.
\end{equation}
Therefore, the consumption \eqref{dfunc} becomes
\begin{align}
q = \frac{(1-\tau)w}{Np}.\label{eqqp}
\end{align}
in the symmetric equilibrium. Substituting the price \eqref{p2} into \eqref{eqqp} and solving for $q$, we have
\begin{equation}\label{eqq}
q = \frac{(1-\tau)\left(L+L_g-\alpha g\right)}{\left(Nm+\alpha(1-\tau)\right)\left(L+L_g\right)}.
\end{equation}
Solving \eqref{empeq} for $L$ gives
\begin{equation}\label{L}
L = \frac{N\left(mL_gq+mg+F\right)}{1-Nmq}.
\end{equation}
From \eqref{L}, we can easily see that $L>0$ because it is immediate that
\begin{equation}\label{1mNmqgt0}
1-Nmq>0
\end{equation}
from \eqref{eqq}. Combining \eqref{eqq} and \eqref{L}, and solving the quadratic equation thus obtained for $L$, we obtain 
\begin{equation}\label{solvedL}
L = \frac{N\left[(1-\tau)(mL_g+\alpha F)+(mg+F)mN\right]}{\alpha+(Nm-\alpha)\tau}.
\end{equation}

\noindent
{\bf Remark}: For an exogenously given worker population $L^*$, there is no guarantee that the amount of employment $L$ given in \eqref{solvedL} matches it. In particular, we focus on the case $L<L^*$. Moreover, not only can there be such underemployment in the short-run where the nominal wage is arbitrarily fixed, but it cannot be eliminated even in the long-run where the nominal wage falls because the equilibrium employment \eqref{solvedL} does not depend on $w$, hence $dL/dw=0$. This would reflect the central Keynesian logic that a fall in wages results in a contraction of effective demand through a decrease in incomes.

\vspace{5mm}
The next proposition states that for the profit to be non-negative, the fixed input must be somewhat small.
\begin{prop}\label{prop:nonnegativeprofit}
Assume \eqref{Nmbtalp}. If
\begin{equation}\label{plusprNsuff}
F\leq \frac{\alpha\left((L+L_g)q+g\right)L}{(L+L_g)N}
\end{equation}
hold, then $\Pi\geq 0$.
\end{prop}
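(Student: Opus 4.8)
The plan is to put the equilibrium profit into closed form using the markup formula \eqref{pmwresult} that has already been established, and then to collapse the inequality $\Pi\geq 0$ onto the hypothesis \eqref{plusprNsuff} by invoking the employment identity \eqref{empeq}.

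First I would abbreviate $A:=(L+L_g)q+g$, so that in the symmetric equilibrium \eqref{defprofit} reads $\Pi = A\,(p-mw)-Fw$. Substituting the expression for $p-mw$ from \eqref{pmwresult} gives
\[
\Pi = \frac{\alpha m w\,A^2}{L+L_g-\alpha A}-Fw = w\!\left[\frac{\alpha m A^2}{L+L_g-\alpha A}-F\right].
\]
If $w=0$ the claim holds trivially (then $\Pi=0$), so assume $w>0$; then $\Pi\geq 0$ is equivalent to $\alpha m A^2\geq F\,(L+L_g-\alpha A)$. Here the factor $L+L_g-\alpha A$ is strictly positive by Proposition \ref{prop:LalpLqgplus} under the standing assumption \eqref{Nmbtalp}, so clearing it does not reverse the inequality.

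Next I would use the employment balance \eqref{empeq}, which in the present notation says $L=N(mA+F)$, i.e.\ $mA = L/N-F$. Multiplying through by $\alpha A$ yields $\alpha m A^2 = \alpha A L/N-\alpha A F$. Inserting this into the inequality above, the term $\alpha A F$ cancels on both sides, and $\Pi\geq 0$ reduces to $\alpha A L/N\geq F(L+L_g)$, i.e.\ $F\leq \dfrac{\alpha A L}{(L+L_g)N}$, which is precisely \eqref{plusprNsuff}. This completes the argument.

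The computation is entirely routine; there is no real obstacle beyond keeping track of signs. The two points that deserve explicit mention are that the denominator $L+L_g-\alpha A$ must be positive before one clears it (supplied by Proposition \ref{prop:LalpLqgplus}) and that the degenerate case $w=0$ should be disposed of separately. It is also worth noting as a remark that the chain of equivalences shows \eqref{plusprNsuff} is not merely sufficient but also necessary for $\Pi\geq 0$.
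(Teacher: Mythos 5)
Your proof is correct and follows essentially the same route as the paper: both substitute the markup expression \eqref{pmwresult} into \eqref{Profit}, use the employment identity \eqref{empeq} to replace $m\left((L+L_g)q+g\right)+F$ by $L/N$, and invoke Proposition \ref{prop:LalpLqgplus} for the sign of the denominator, arriving at the same equivalence between $\Pi\geq 0$ and \eqref{plusprNsuff}. Your explicit treatment of $w=0$ and the remark on necessity are minor refinements of the paper's argument, which likewise concludes with an equivalence.
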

\begin{proof}
The profit \eqref{defprofit} in the symmetric equilibrium becomes
\begin{equation}\label{Profit}
\Pi = \left((L+L_g)q+g\right)(p-mw)-Fw
\end{equation}
Applying \eqref{pmwresult} to \eqref{Profit}, and using \eqref{empeq}, we have
\[
\Pi = \frac{\alpha w\left((L+L_g)q+g\right)\frac{L}{N}-Fw(L+L_g)}{L+L_g-\alpha\left((L+L_g)q+g\right)},
\]
the denominator of which is positive from Proposition \ref{prop:LalpLqgplus}. Therefore, the non-negative profit is equivalent to \eqref{plusprNsuff}.
\end{proof}

\subsection{Responses of equilibrium to policy changes}\label{sec:maintheorems}
Theorem \ref{th:empl} shows how the private employment responds to each policy. See Subsection \ref{subsec:thL} for the proof. 
\begin{theo}\label{th:empl}
Suppose that \eqref{Nmbtalp} holds, then
\begin{align}
&\frac{\partial L}{\partial g} > 0, \label{dLdg}\\
&\frac{\partial L}{\partial\tau} < 0, \label{dLdt}\\
&\frac{\partial L}{\partial L_g} > 0.\label{dLdLg}
\end{align}
\end{theo}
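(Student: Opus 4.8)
The plan is to argue entirely from the closed-form expression \eqref{solvedL},
\[
L = \frac{N\left[(1-\tau)(mL_g+\alpha F)+(mg+F)mN\right]}{\alpha+(Nm-\alpha)\tau},
\]
so the first step is to record that under \eqref{Nmbtalp} the denominator $D:=\alpha+(Nm-\alpha)\tau$ is strictly positive (indeed $D\ge\alpha>0$, since $Nm-\alpha>0$ and $\tau\in(0,1)$), and that the numerator is strictly positive because $F>0$ --- the same strictness already exploited in \eqref{LNmLLgqgNF>0}.

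The inequalities \eqref{dLdg} and \eqref{dLdLg} are then immediate, since $D$ involves neither $g$ nor $L_g$. Viewing $L$ as a function of $g$ alone, the numerator is affine in $g$ with slope $N\cdot mN\cdot m=N^2m^2>0$, so $\partial L/\partial g=N^2m^2/D>0$; viewing $L$ as a function of $L_g$ alone, the numerator is affine in $L_g$ with slope $Nm(1-\tau)>0$ (here $\tau<1$ is used), so $\partial L/\partial L_g=Nm(1-\tau)/D>0$.

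The one step needing care is \eqref{dLdt}, because $\tau$ enters the numerator and the denominator simultaneously. The device I would use is the substitution $s:=1-\tau\in(0,1)$: with $A:=mL_g+\alpha F\ge0$, $B:=mN(mg+F)>0$, and $\alpha+(Nm-\alpha)\tau=Nm-(Nm-\alpha)s$, one gets
\[
L=\frac{N(sA+B)}{Nm-(Nm-\alpha)s}.
\]
The numerator $N(sA+B)$ is positive and nondecreasing in $s$, while the denominator is positive --- it exceeds $Nm-(Nm-\alpha)=\alpha$ since $s<1$ and $Nm-\alpha>0$ --- and strictly decreasing in $s$; a ratio of a positive nondecreasing function to a positive strictly decreasing one is strictly increasing, so $L$ is strictly increasing in $s$, i.e. strictly decreasing in $\tau$. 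Equivalently, the direct quotient rule gives that the numerator of $\partial L/\partial\tau$ equals $-N\left[Nm(mL_g+\alpha F)+mN(mg+F)(Nm-\alpha)\right]$, negative under \eqref{Nmbtalp} once $F>0$.

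I expect the only real obstacle to be this bookkeeping for the $\tau$-derivative --- the cancellations depend on the $\tau$-linear terms collapsing, which the reparametrization $s=1-\tau$ makes transparent --- and the only modelling input used beyond the displayed equations is $F>0$, which keeps the numerator of \eqref{solvedL} positive and excludes the degenerate case $L\equiv0$.
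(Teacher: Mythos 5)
Your proof is correct and follows essentially the same route as the paper's (Subsection \ref{subsec:thL}): differentiate the closed form \eqref{solvedL}, the $\tau$-derivative's numerator collapsing to $-N^2m^2\left[L_g+(Nm-\alpha)g+NF\right]$ exactly as your quotient-rule computation shows, with the $s=1-\tau$ reparametrization being only a cosmetic variation. Your reliance on $F>0$ for strictness (needed in \eqref{dLdt} when $g=L_g=0$) is shared with the paper, which likewise writes $NF>0$ in \eqref{LNmLLgqgNF>0} despite formally assuming only $F\geq 0$.
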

The first two results (\eqref{dLdg} and \eqref{dLdt}) are consistent with the standard IS-LM (or AD-AS) model, in which the government purchase (resp. tax increase) improves (resp. deteriorates) the private employment. The third result \eqref{dLdLg} is also straightforward in the sense that the government employment supports income and, therefore, increases the private employment through economic expansion.

Theorem \ref{th:cons} shows responses of the consumption per capita to the policies. See Subsection \ref{subsec:thq} for the proof. 
\begin{theo}\label{th:cons}
Suppose that \eqref{Nmbtalp} holds, then 
\begin{align}
&\frac{\partial q}{\partial g} < 0, \label{dqdg}\\
&\frac{\partial q}{\partial\tau} < 0, \label{dqdt}\\
&\frac{\partial q}{\partial L_g} \geq 0\label{dqdLg}.
\end{align}
\end{theo}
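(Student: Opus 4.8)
The plan is to first reduce the equilibrium consumption $q$ to an explicit function of the three policy variables $g$, $\tau$, $L_g$ alone, and then to differentiate. Equation \eqref{eqq} already expresses $q$ through $L+L_g$, but $L$ itself depends on every policy via \eqref{solvedL}; substituting that formula for $L$ into \eqref{eqq} --- equivalently, eliminating $L$ between \eqref{eqq} and the employment identity \eqref{empeq} --- and collecting terms, one would obtain the compact form
\[
q = \frac{(1-\tau)\bigl[(Nm-\alpha)g + NF + L_g\bigr]}{(Nm)^2 g + (NF+L_g)\bigl(Nm+\alpha(1-\tau)\bigr)}.
\]
It is convenient to abbreviate $C := NF+L_g$, $A := Nm+\alpha(1-\tau)$, $\Delta := \alpha+(Nm-\alpha)\tau = Nm\tau+\alpha(1-\tau)$, $P := (Nm-\alpha)g + C$, and $Q := (Nm)^2 g + CA$, so that $q = (1-\tau)P/Q$. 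Under \eqref{Nmbtalp} together with $g,F,L_g\geq 0$, $\tau\in(0,1)$, and the positivity $NF+L_g>0$ already used in the proof of Proposition \ref{prop:LalpLqgplus}, one has $A,\Delta,P,Q>0$; the single algebraic identity behind all three sign claims is $(Nm-\alpha)A-(Nm)^2 = -\alpha\Delta$.

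With $q$ in this form, each derivative is a one-line quotient-rule computation. Only $g$ enters the right-hand side, and $q$ is a M\"obius function of $g$, so the sign of $\partial q/\partial g$ is that of $(1-\tau)(Nm-\alpha)CA - (1-\tau)C(Nm)^2 = (1-\tau)C\bigl[(Nm-\alpha)A-(Nm)^2\bigr] = -(1-\tau)\alpha\Delta C < 0$, which is \eqref{dqdg}. For $\partial q/\partial\tau$, the factor $P$ is independent of $\tau$ while $\partial Q/\partial\tau = -\alpha C$, so the quotient rule produces a numerator $P\bigl[-Q+(1-\tau)\alpha C\bigr] = -Nm\,P\,(Nmg+C) < 0$, which is \eqref{dqdt}. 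For $\partial q/\partial L_g$ one has $\partial C/\partial L_g = \partial P/\partial L_g = 1$ and $\partial Q/\partial L_g = A$, and the quotient rule gives a numerator $(1-\tau)(Q-PA) = (1-\tau)\alpha\Delta\,g \geq 0$, which is \eqref{dqdLg}; this computation also makes clear why \eqref{dqdLg} is only a weak inequality --- it is an equality exactly when $g=0$.

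The only genuine effort is in the first step: carrying out the elimination of $L$ and rearranging the result into the displayed fraction, where the identity $(Nm-\alpha)A=(Nm)^2-\alpha\Delta$ is what makes the numerator factor cleanly; after that, the three signs follow at once from $Nm>\alpha$ and the nonnegativity of $g$, $F$, $L_g$ with $\tau\in(0,1)$. A tempting alternative --- differentiating \eqref{eqq} directly and feeding in the monotonicities of Theorem \ref{th:empl} --- runs into the difficulty that the sign of expressions such as $(L+L_g)-g\,\partial L/\partial g$ is not evident, so the closed-form route is the cleaner one.
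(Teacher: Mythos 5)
Your proof is correct: the closed form $q=(1-\tau)P/Q$ with $P=(Nm-\alpha)g+NF+L_g$ and $Q=(Nm)^2g+(NF+L_g)(Nm+\alpha(1-\tau))$ does follow from substituting \eqref{solvedL} into \eqref{eqq} (equivalently, $L+L_g=Q/\Delta$ and $L+L_g-\alpha g=AP/\Delta$, glued together by your identity $(Nm-\alpha)A=(Nm)^2-\alpha\Delta$), and the three quotient-rule computations and their signs all check out, including the observation that \eqref{dqdLg} is an equality exactly when $g=0$. The route is genuinely different in organization from the paper's: the paper differentiates \eqref{eqq} first, keeping $L$ as an intermediate variable, and then imports the monotonicity results \eqref{dLdg}--\eqref{dLdLg} of Theorem \ref{th:empl}; for $\partial q/\partial g$ it still has to fall back on the explicit formula \eqref{solvedL} to settle the sign of $-L-L_g+g\,\partial L/\partial g$ (its display \eqref{mLmLgpgdLdgneg}), which is precisely the non-obvious combination you flag at the end, while for $\partial q/\partial\tau$ it avoids that issue with a logarithmic-derivative trick, writing $\partial(\ln q)/\partial\tau$ as a sum of two manifestly negative terms. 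Your elimination-first approach buys a self-contained, uniform three-line argument hinging on one algebraic identity and makes the borderline case $g=0$ transparent; the paper's approach buys economic interpretability (each sign is traced back to the employment responses of Theorem \ref{th:empl}) at the cost of the extra ad hoc estimate. One shared caveat: the strict inequality in \eqref{dqdg} needs $NF+L_g>0$, which, as you note, is the same implicit strengthening of $F\geq 0$ to $F>0$ that the paper already uses in proving Proposition \ref{prop:LalpLqgplus}, so your proof is no worse off than the original on this point.
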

In this model, the government purchase crowds out the private consumption as  \eqref{dqdg} states. This is because, as shown below in Theorem \ref{th:price}, the government purchase leads to a higher price, which must decrease the private consumption under a given nominal wage. The tax increase also reduces disposable income, heasence, \eqref{dqdt} is intuitive. Meanwhile, it is noteworthy that an increase in the government employment, in contrast to the government purchase, leads to an increase in the consumption, and does not cause crowding out as shown in \eqref{dqdLg}. This is because the government employment does not raise the price, as Theorem \ref{th:price} shows below.

Theorem \ref{th:price} shows responses of the price to the policies. See Subsection \ref{subsec:thp} for the proof. 
\begin{theo}\label{th:price}
Suppose that \eqref{Nmbtalp} holds, then 
\begin{align}
&\frac{\partial p}{\partial g} > 0, \label{dpdg}\\
&\frac{\partial p}{\partial\tau} < 0, \label{dpdt}\\
&\frac{\partial p}{\partial L_g} \leq 0. \label{dpdLg}
\end{align}
\end{theo}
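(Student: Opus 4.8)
The plan is to reduce all three comparative statics to differentiating one closed-form expression for the equilibrium price. First I would substitute the after-tax income \eqref{uniformincome} into \eqref{p1}, use $mw+\alpha(1-\tau)w/N=w\big(Nm+\alpha(1-\tau)\big)/N$, and then eliminate $L$ by means of the solved employment level \eqref{solvedL}. Writing $A:=Nm+\alpha(1-\tau)$ and $D:=\alpha+(Nm-\alpha)\tau$, a short computation gives $L+L_g=\big((L_g+NF)A+N^2m^2g\big)/D$ and $L+L_g-\alpha g=A\big((L_g+NF)+(Nm-\alpha)g\big)/D$, so the common factor $A/D$ cancels and
\[
p=\frac{w\big((L_g+NF)A+N^2m^2g\big)}{N\big((L_g+NF)+(Nm-\alpha)g\big)}.
\]
Assumption \eqref{Nmbtalp} makes $Nm-\alpha>0$ and (since $\tau\in(0,1)$) $D>0$; together with $NF>0$ (as used in the proof of Proposition \ref{prop:LalpLqgplus}) and $w>0$ this is all that will be needed for the signs.

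Next I would differentiate this closed form with respect to each policy variable, treating the other two (hence $A$, $D$ and $B:=L_g+NF$ as appropriate) as constants. For $g$, the quotient rule produces a numerator in which the terms linear in $g$ cancel, leaving $wNB\big(N^2m^2-(Nm-\alpha)A\big)$ over the square of the (positive) denominator; since $N^2m^2-(Nm-\alpha)A=\alpha D>0$, this is positive, giving \eqref{dpdg}. For $\tau$, the denominator of the closed form is independent of $\tau$, so $\partial p/\partial\tau$ equals $(w/N)(\partial A/\partial\tau)B$ over that denominator, i.e. $-w\alpha B/\big(N((L_g+NF)+(Nm-\alpha)g)\big)<0$, which is \eqref{dpdt}. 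For $L_g$, only $B$ varies (with $\partial B/\partial L_g=1$), and the quotient rule gives a numerator equal to $g\big((Nm-\alpha)A-N^2m^2\big)=-g\alpha D\le 0$ times a positive factor, hence $\partial p/\partial L_g\le 0$ with equality exactly when $g=0$, which is \eqref{dpdLg}.

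The only genuine work here is algebraic, and the one thing worth isolating is the identity $N^2m^2-(Nm-\alpha)\big(Nm+\alpha(1-\tau)\big)=\alpha\big(\alpha+(Nm-\alpha)\tau\big)$, which is exactly what makes every sign transparent. For \eqref{dpdg} and \eqref{dpdLg} there is a shorter alternative that avoids the explicit formula: from \eqref{eqqp} one has $p=(1-\tau)w/(Nq)$, so with $\tau$ held fixed $\partial p/\partial g$ and $\partial p/\partial L_g$ have the signs opposite to $\partial q/\partial g<0$ and $\partial q/\partial L_g\ge 0$ established in Theorem \ref{th:cons}; but this shortcut fails for $\partial p/\partial\tau$, where $\tau$ enters $p$ both directly and through $q$, so the explicit expression (or implicit differentiation of \eqref{priceeqpmw} together with Theorem \ref{th:empl}) is still required there. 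I expect the main obstacle to be nothing more than keeping the cancellations straight rather than anything conceptual.
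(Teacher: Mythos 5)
Your proposal is correct: I verified the closed form $p=\dfrac{w\bigl((L_g+NF)A+N^2m^2g\bigr)}{N\bigl((L_g+NF)+(Nm-\alpha)g\bigr)}$ with $A=Nm+\alpha(1-\tau)$, the key identity $N^2m^2-(Nm-\alpha)A=\alpha\bigl(\alpha+(Nm-\alpha)\tau\bigr)$, and all three resulting signs, including the equality case $g=0$ in \eqref{dpdLg}. The route is genuinely different from the paper's, though. The paper never writes an explicit reduced form for $p$; it differentiates \eqref{p1} (for $g$ and $L_g$) and \eqref{p2} (for $\tau$) with $L$ and $q$ kept as implicit functions, and then imports the already-established signs $\partial L/\partial g>0$, $\partial L/\partial L_g>0$, $\partial q/\partial\tau<0$ from Theorems \ref{th:empl} and \ref{th:cons}, together with an auxiliary inequality $L+L_g-g\,\partial L/\partial g>0$ and the relations \eqref{dLdtNmLgmgFLg1Nmq2dqdt}--\eqref{LLg2NmgFLg1mNmq2} derived from \eqref{L}. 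Your elimination of $L$ via \eqref{solvedL} buys a self-contained, purely algebraic argument in which every sign is read off from one identity, at the cost of reproving facts the paper already has in hand; the paper's chain-rule route is less transparent algebraically but makes the economic channels (through $L$ and $q$) visible. Your observation that $p=(1-\tau)w/(Nq)$ gives \eqref{dpdg} and \eqref{dpdLg} for free from Theorem \ref{th:cons}, but not \eqref{dpdt}, is accurate. One shared caveat, not a gap relative to the paper: strict positivity in \eqref{dpdg} and \eqref{dpdt} needs $L_g+NF>0$ and $w>0$, exactly as the paper's own step \eqref{mLmLgpgdLdgneg} implicitly does.
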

Here, \eqref{dpdg} and \eqref{dpdt} are also consistent with the standard AD-AS model, that is, the government purchase raises the price level, and the tax increase lowers it. We should emphasize \eqref{dpdLg} which shows that the government employment does not lead to a higher price, rather it lowers the price in general. The difference stems from whether the additional effective demand is affected by the pricing behavior of the firms or not. The government purchase provides the firms with a certain amount of demand independent of the prices, allowing the firms to set aggressive prices. On the other hand, since demand from the governmentally employed workers is of course affected by the prices, there is pressure to lower the prices due to competition among the firms.

Theorem \ref{th:profit} and \ref{th:profitLg} show how the profit responds to the policies. See Subsection \ref{subsec:thPigt} and \ref{subsec:thprLgLam} for the proof. 
\begin{theo}\label{th:profit}
Suppose that \eqref{Nmbtalp} holds, then 
\begin{align}
&\frac{\partial \Pi}{\partial g} > 0, \label{dPidg}\\
&\frac{\partial \Pi}{\partial\tau} < 0. \label{dPidt}
\end{align}
\end{theo}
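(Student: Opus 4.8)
The goal is to sign the partial derivatives $\partial\Pi/\partial g$ and $\partial\Pi/\partial\tau$. The natural starting point is the closed-form expression for the equilibrium profit derived in the proof of Proposition \ref{prop:nonnegativeprofit}, namely
\[
\Pi = \frac{\alpha w\left((L+L_g)q+g\right)\frac{L}{N}-Fw(L+L_g)}{L+L_g-\alpha\left((L+L_g)q+g\right)}.
\]
The plan is \emph{not} to differentiate this quotient directly — both numerator and denominator depend on $g$ and $\tau$ through $L$ and $q$, which would be messy. Instead I would first simplify the algebra by recalling the employment identity \eqref{empeq}, $L = N[m(L+L_g)q+mg+F]$, which gives $(L+L_g)q+g = \frac{L-NF}{Nm} + \frac{L_g}{Nm}\cdot(\text{something})$ — more precisely, from \eqref{LNmLLgqgNF>0} one has $Nm((L+L_g)q+g) = L-NF$, so $(L+L_g)q+g = (L-NF)/(Nm)$. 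Substituting this single relation into both the numerator and the denominator of $\Pi$ should collapse $\Pi$ into a function of $L$, $L_g$, and the parameters only (with $w$ factoring out), of the form
\[
\Pi = \frac{\alpha w\big(\tfrac{L-NF}{Nm}\big)\tfrac{L}{N} - Fw(L+L_g)}{L+L_g - \tfrac{\alpha}{Nm}(L-NF)}
 = w\cdot\frac{\tfrac{\alpha}{N^2m}L(L-NF) - F(L+L_g)}{\big(1-\tfrac{\alpha}{Nm}\big)L + L_g + \tfrac{\alpha F}{m}}.
\]

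**Reducing to a one-variable problem.** Once $\Pi$ is written as $\Pi = w\,\Phi(L, L_g)$ with $L_g$ held fixed, the chain rule gives $\partial\Pi/\partial g = w\,\Phi_L(L,L_g)\cdot(\partial L/\partial g)$ and $\partial\Pi/\partial\tau = w\,\Phi_L(L,L_g)\cdot(\partial L/\partial\tau)$. By Theorem \ref{th:empl} we already know $\partial L/\partial g > 0$ and $\partial L/\partial\tau < 0$. Therefore \emph{both} claims \eqref{dPidg} and \eqref{dPidt} follow at once if I can show $\Phi_L(L,L_g) > 0$, i.e. that equilibrium profit is increasing in private employment $L$ along the family of equilibria with fixed $L_g$. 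This is the crux, and it is also economically the cleanest statement: more private activity means more profit. Note the denominator of $\Phi$ is positive under \eqref{Nmbtalp} — indeed $(1-\tfrac{\alpha}{Nm})L + L_g + \tfrac{\alpha F}{m} > 0$ because $Nm>\alpha$ and all terms are then nonnegative (this is essentially a restatement of Proposition \ref{prop:LalpLqgplus}). So the sign of $\Phi_L$ is the sign of the derivative of a quotient with positive denominator; differentiating, $\Phi_L$ has the sign of $N'D - ND'$ where $N(L) = \tfrac{\alpha}{N^2m}L(L-NF) - F(L+L_g)$ and $D(L) = (1-\tfrac{\alpha}{Nm})L + L_g + \tfrac{\alpha F}{m}$.

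**The main obstacle and how to handle it.** The hard part is verifying $N'(L)D(L) - N(L)D'(L) > 0$ on the relevant range of $L$. Here $N'(L) = \tfrac{\alpha}{N^2m}(2L - NF) - F$ and $D'(L) = 1 - \tfrac{\alpha}{Nm}$, both constants or linear, so $N'D - ND'$ is a quadratic in $L$; I would expand it and try to show it is positive by grouping terms using the structural inequalities available — in particular \eqref{1mNmqgt0} ($1-Nmq>0$, equivalently $L > N F + $ positive terms after using \eqref{empeq}), the positivity $L > NF$ from \eqref{LNmLLgqgNF>0}, and $Nm > \alpha$. A cleaner route, which I would try first, is to avoid the quotient rule entirely: write $\Pi/w = \Phi$ and compute $\partial\Phi/\partial L$ implicitly via $D\cdot\Phi = N$, so $D\,\Phi_L + D'\Phi = N'$, giving $\Phi_L = (N' - D'\Phi)/D = (N' - D'\Pi/w)/D$. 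Since $D>0$, $\Phi_L>0$ iff $N' > D'\,\Pi/w$, i.e. $N'(L) > (1-\tfrac{\alpha}{Nm})\,\Pi/w$; and here one can use the \emph{known sign} $\Pi \geq 0$ (from Proposition \ref{prop:nonnegativeprofit}, under its hypothesis) together with an upper bound on $\Pi/w$ to finish, or simply plug the explicit $\Pi$ back in and reduce to a single polynomial inequality in $L, L_g, F$ and the ratio $\alpha/(Nm) \in (0,1)$. I expect that after clearing denominators everything reduces to a sum of manifestly nonnegative terms (each a product of quantities already shown positive: $L-NF$, $Nm-\alpha$, $L_g$, $F$, $L$), so the verification, though calculation-heavy, is routine once the substitution $(L+L_g)q+g = (L-NF)/(Nm)$ has been made. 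The one subtlety to watch: Theorem \ref{th:profit} as stated does not assume the profit-nonnegativity hypothesis \eqref{plusprNsuff}, so the argument should establish $\Phi_L>0$ from \eqref{Nmbtalp} alone (plus the identities), not rely on $\Pi\geq0$; this is why I'd ultimately favor the direct polynomial reduction over the $\Pi\geq 0$ shortcut.
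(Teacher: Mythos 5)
Your proposal is correct, but it takes a genuinely different route from the paper. The paper works directly from $\Pi=\left((L+L_g)q+g\right)(p-mw)-Fw$, applies the product rule, and signs each piece using results already in hand: $p-mw>0$ from \eqref{p-mw>0}, $\partial p/\partial g>0$ and $\partial p/\partial\tau<0$ from Theorem \ref{th:price}, and the signs of $\frac{\partial}{\partial g}\left((L+L_g)q\right)$ and $\frac{\partial}{\partial\tau}\left((L+L_g)q\right)$ via \eqref{LplusLgq} and Theorems \ref{th:empl}--\ref{th:cons}. You instead substitute the labor-market identity $(L+L_g)q+g=(L-NF)/(Nm)$ into the closed form of $\Pi$ to obtain $\Pi=w\,\Phi(L,L_g)$ and reduce both claims to the single lemma $\Phi_L>0$, importing only $\partial L/\partial g>0$ and $\partial L/\partial\tau<0$ from Theorem \ref{th:empl}. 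The one step you defer as ``routine'' does in fact close: with $a:=\alpha/(Nm)\in(0,1)$ one computes
\begin{equation}
N'(L)D(L)-N(L)D'(L)=a\,(L-NF)\left[\frac{(1-a)(L+NF)+2L_g}{N}+2aF\right],
\end{equation}
which is positive because $L-NF=Nm\left((L+L_g)q+g\right)>0$ by \eqref{LNmLLgqgNF>0}, so $\Phi_L>0$ and both signs follow. You were also right to avoid the $\Pi\geq 0$ shortcut, since \eqref{plusprNsuff} is not assumed here. Comparing the two: the paper's argument is shorter given that Theorems \ref{th:empl}--\ref{th:price} are already proved, and it reuses the same decomposition in Theorem \ref{th:profitgorLg}; yours buys a cleaner structural statement --- equilibrium profit is increasing in private employment along the equilibrium manifold --- which handles $g$ and $\tau$ simultaneously and also makes transparent why $L_g$ behaves differently (it enters $\Phi$ directly, not only through $L$, which is why Theorem \ref{th:profitLg} needs the threshold $\Lambda$).
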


\begin{theo}\label{th:profitLg}
Suppose that \eqref{Nmbtalp} holds. Let $\Lambda$ be
\begin{equation}\label{thrLambda}
\Lambda:=(1-Nmq)\left(\frac{Nm}{1-\tau}+2\alpha\right)g-N(mg+F).
\end{equation}
Then,
\begin{equation}\label{dPidLg}
\left\{
\begin{aligned}
&\frac{\partial \Pi}{\partial L_g} < 0,\text{when } L_g<\Lambda,\\
&\frac{\partial \Pi}{\partial L_g} = 0,\text{when } L_g=\Lambda,\\
&\frac{\partial \Pi}{\partial L_g} > 0,\text{when } L_g>\Lambda.
\end{aligned}
\right.
\end{equation}
\end{theo}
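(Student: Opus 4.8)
The plan is to write the equilibrium profit explicitly as a function of the single aggregate $\Sigma := L + L_g$, observe that $\Sigma$ is affine and strictly increasing in $L_g$, differentiate once with respect to $\Sigma$, and then match the sign of that derivative with the sign of $L_g - \Lambda$ by an elementary factorization of a quadratic in $\Sigma$.

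\emph{Step 1 (reduce $\Pi$ to a function of $\Sigma$).} From \eqref{empeq} one has $(L+L_g)q + g = (L - NF)/(Nm)$, so, after substituting the closed forms \eqref{solvedL} and \eqref{eqq}, every quantity appearing in \eqref{Profit} can be expressed through $\Sigma$ alone. Writing $A := \alpha + (Nm-\alpha)\tau$ and $B := Nm + \alpha(1-\tau)$, which are positive under \eqref{Nmbtalp}, routine algebra yields
\[
(L+L_g)q + g = \frac{(1-\tau)\Sigma + Nmg}{B},\qquad p - mw = \frac{\alpha w\left((1-\tau)\Sigma + Nmg\right)}{N(\Sigma - \alpha g)},
\]
and hence
\[
\Pi = \frac{\alpha w\left((1-\tau)\Sigma + Nmg\right)^{2}}{BN(\Sigma - \alpha g)} - Fw .
\]
Here $\Sigma > 0$ because $L > 0$ and $L_g \ge 0$, and $\Sigma - \alpha g > 0$ by Proposition \ref{prop:LalpLqgplus}, so all the denominators above are positive.

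\emph{Step 2 (differentiate).} Differentiating \eqref{solvedL} gives $\partial\Sigma/\partial L_g = B/A > 0$, so $\partial\Pi/\partial L_g$ has the same sign as $\partial\Pi/\partial\Sigma$. A direct quotient-rule computation gives
\[
\frac{\partial\Pi}{\partial\Sigma} = \frac{\alpha w\left((1-\tau)\Sigma + Nmg\right)}{BN(\Sigma - \alpha g)^{2}}\left[(1-\tau)\Sigma - \left(Nm + 2\alpha(1-\tau)\right)g\right],
\]
and the prefactor is positive, so $\partial\Pi/\partial L_g$ has the same sign as $(1-\tau)\Sigma - \left(Nm + 2\alpha(1-\tau)\right)g$.

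\emph{Step 3 (identify the threshold $\Lambda$).} It remains to show that this linear expression in $\Sigma$ has the same sign as $L_g - \Lambda$. From \eqref{eqq} one obtains $1 - Nmq = (A\Sigma + Nm\alpha(1-\tau)g)/(B\Sigma)$; using this, the relation $L_g = A(\Sigma - b)/B$, where $b$ is the $L_g$-independent part of \eqref{solvedL}, and the identity $Ab = NFB + N^{2}m^{2}g$, the product $B(1-\tau)\Sigma\,(L_g - \Lambda)$ becomes a quadratic in $\Sigma$ with positive leading coefficient $A(1-\tau)$ and constant term $-\,Nm\alpha(1-\tau)\left(Nm + 2\alpha(1-\tau)\right)g^{2}$. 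One checks by substitution that $\Sigma = \left(Nm + 2\alpha(1-\tau)\right)g/(1-\tau)$ is one of its roots; since the product of the two roots equals $-\,Nm\alpha\left(Nm + 2\alpha(1-\tau)\right)g^{2}/A \le 0$, the other root $\Sigma_{1}$ satisfies $\Sigma_{1} \le 0 < \Sigma$, whence
\[
B(1-\tau)\Sigma\,(L_g - \Lambda) = A\left[(1-\tau)\Sigma - \left(Nm + 2\alpha(1-\tau)\right)g\right](\Sigma - \Sigma_{1}).
\]
Since $B(1-\tau)\Sigma > 0$ and $A(\Sigma - \Sigma_{1}) > 0$, the quantities $L_g - \Lambda$ and $(1-\tau)\Sigma - \left(Nm + 2\alpha(1-\tau)\right)g$ — and therefore $\partial\Pi/\partial L_g$ by Step 2 — all share the same sign, which is exactly \eqref{dPidLg}. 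The real work is in this last step, specifically verifying that $\Sigma = \left(Nm + 2\alpha(1-\tau)\right)g/(1-\tau)$ is a root of the quadratic and that the remaining root is nonpositive, which demands careful bookkeeping of the parameter groupings $A$, $B$, $Nm + 2\alpha(1-\tau)$ and of the constant $b$; Step 1 is also somewhat laborious but entirely mechanical, and once the factorization above is in hand the stated trichotomy is immediate.
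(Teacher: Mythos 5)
Your proposal is correct: Step 1's closed forms check out (indeed $(L+L_g)q+g=((1-\tau)\Sigma+Nmg)/B$ follows from \eqref{LplusLgq}, and the denominator of $p-mw$ reduces to $Nm(\Sigma-\alpha g)/B$), and the sign condition you extract in Step 2, namely the sign of $(1-\tau)\Sigma-\bigl(Nm+2\alpha(1-\tau)\bigr)g$, is exactly $(1-\tau)$ times the paper's quantity $\mathcal{F}=L+L_g-g\bigl(2\alpha+\tfrac{Nm}{1-\tau}\bigr)$, so the two arguments converge on the same threshold. The difference is in execution. The paper differentiates $\Pi=\alpha w m\{(L+L_g)q+g\}^2/\bigl[L+L_g-\alpha((L+L_g)q+g)\bigr]-Fw$ directly in $L_g$, treating $L$ and $q$ as functions of $L_g$, and then peels off positive factors ($\mathcal{D}\to\mathcal{E}\to\mathcal{F}$); your reduction to the single aggregate $\Sigma=L+L_g$ with $\partial\Sigma/\partial L_g=B/A>0$ makes that differentiation a one-variable quotient-rule computation, which is cleaner. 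On the other hand, your Step 3 is considerably more roundabout than it needs to be: rather than factoring a quadratic in $\Sigma$ and locating its roots, you can get the same conclusion in one line from \eqref{L}, which gives $L+L_g=\bigl(N(mg+F)+L_g\bigr)/(1-Nmq)$ and hence $\mathcal{F}=(L_g-\Lambda)/(1-Nmq)$ with $1-Nmq>0$ by \eqref{1mNmqgt0} --- this is what the paper does. (One shared caveat, not a gap relative to the paper: the strict inequalities in \eqref{dPidLg} implicitly require $w>0$, since the prefactor in both your expression and the paper's \eqref{Pirewritten} carries a factor of $w$.)
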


The government purchase increases the profit as in \eqref{dPidg} while the tax increase reduces the profit as in \eqref{dPidt}. The impact of the government employment depends on the size of the government employment itself, as in \eqref{dPidLg}. If the thereshold $\Lambda$ takes a negative value, then $\partial\Pi/\partial L_g>0$ for any $L_g$. If $L_g\leq \Lambda$, then the firms may prefer the government purchase that secures their own profits over the government employment. For $L_g>\Lambda$, it is unclear which policy the firms prefer, but the following theorem states that if the tax rate $\tau$ is sufficiently high, the government purchase is more favorable to the firms. See Subsection \ref{subsec:thPigPiLg} for the proof.
\begin{theo}\label{th:profitgorLg}
Suppose that \eqref{Nmbtalp} holds. If
\begin{equation}\label{alpgt1tau}
\alpha >1-\tau
\end{equation}
holds, then
\[
\frac{\partial \Pi}{\partial g} > \frac{\partial \Pi}{\partial L_g}.
\]
\end{theo}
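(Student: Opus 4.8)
The plan is to bring both derivatives into closed form and subtract. For brevity put $A:=1-\tau$, $B:=Nm+\alpha(1-\tau)$, $D:=\alpha+(Nm-\alpha)\tau$ and $S:=L+L_g$. Substituting the equilibrium consumption \eqref{eqq} (equivalently $Sq=A(S-\alpha g)/B$) into the relevant expressions gives the two identities $(L+L_g)q+g=(AS+Nmg)/B$ and $L+L_g-\alpha\bigl((L+L_g)q+g\bigr)=Nm(S-\alpha g)/B$; feeding these, together with \eqref{pmwresult}, into \eqref{Profit} produces the compact form
\[
\Pi+Fw=\frac{\alpha w\,(AS+Nmg)^{2}}{NB\,(S-\alpha g)}.
\]
Rearranging \eqref{solvedL} yields the equally compact relation $DS=B(L_g+NF)+N^{2}m^{2}g$, so $\partial S/\partial g=N^{2}m^{2}/D$ and $\partial S/\partial L_g=B/D$.

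Next I would differentiate $\ln(\Pi+Fw)$ in $g$ and in $L_g$ (legitimate since $A$, $B$, $N$, $w$, $\alpha$ depend only on $\tau$). Using the elementary identities $ANm+D=B$ and $N^{2}m^{2}-\alpha D=(Nm-\alpha)B$, the derivatives of $AS+Nmg$ and of $S-\alpha g$ simplify and one gets
\[
\frac{\partial\Pi}{\partial g}=(\Pi+Fw)\frac{B}{D}\!\left[\frac{2Nm}{AS+Nmg}-\frac{Nm-\alpha}{S-\alpha g}\right],\qquad
\frac{\partial\Pi}{\partial L_g}=(\Pi+Fw)\frac{B}{D}\!\left[\frac{2A}{AS+Nmg}-\frac{1}{S-\alpha g}\right].
\]
Subtracting,
\[
\frac{\partial\Pi}{\partial g}-\frac{\partial\Pi}{\partial L_g}=(\Pi+Fw)\frac{B}{D}\!\left[\frac{2(Nm-A)}{AS+Nmg}-\frac{Nm-\alpha-1}{S-\alpha g}\right].
\]
The prefactor $(\Pi+Fw)B/D$ is strictly positive (recall $w>0$; $S-\alpha g>0$ by Proposition~\ref{prop:LalpLqgplus}; $B,D>0$), so it remains to show that, after clearing the two positive denominators,
\[
\Delta:=2(Nm-A)(S-\alpha g)-(Nm-\alpha-1)(AS+Nmg)>0.
\]

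Expanding gives $\Delta=(D+Nm-A)\,S+\bigl(Nm(1-\alpha)+2\alpha A-N^{2}m^{2}\bigr)g$. Multiplying by $D$, inserting $DS=B(L_g+NF)+N^{2}m^{2}g$, and reducing the coefficient of $g$ by means of $N^{2}m^{2}=\alpha D+(Nm-\alpha)B$, all the cross terms collapse and one is left with
\[
D\Delta=B\Bigl[(D+Nm-A)(L_g+NF)+\bigl(D+(Nm-\alpha)(Nm-A)\bigr)g\Bigr].
\]
Every factor on the right-hand side is nonnegative: $B,D>0$; $Nm-\alpha>0$ by \eqref{Nmbtalp}; and $Nm-A=Nm-(1-\tau)>0$ because $Nm>\alpha>1-\tau$ by \eqref{Nmbtalp} and \eqref{alpgt1tau}; while $F,g,L_g\geq 0$. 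The right-hand side also cannot vanish, since $F=g=L_g=0$ would force $L=0$ in \eqref{solvedL}, contradicting $L>0$. Hence $\Delta>0$ and the theorem follows.

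The step I expect to be the main obstacle is the middle bookkeeping: first arriving at the compact expression for $\Pi+Fw$, and then carrying the algebra so that the coefficient of $g$ in $D\Delta$ collapses precisely to $B\bigl(D+(Nm-\alpha)(Nm-A)\bigr)$ — the identity $N^{2}m^{2}=\alpha D+(Nm-\alpha)B$ is what makes the cancellation go through. The hypothesis $\alpha>1-\tau$ enters at exactly one point, to secure $Nm>A$, which is needed both for $D+Nm-A>0$ and for $(Nm-\alpha)(Nm-A)>0$; without it the sign of $\Delta$ is genuinely ambiguous.
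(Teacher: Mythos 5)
Your proof is correct: I checked the closed form $\Pi+Fw=\alpha w(AS+Nmg)^{2}/\bigl(NB(S-\alpha g)\bigr)$, the relation $DS=B(L_g+NF)+N^{2}m^{2}g$, and the collapse of the $g$-coefficient of $D\Delta$ to $B\bigl(D+(Nm-\alpha)(Nm-A)\bigr)$ via $N^{2}m^{2}=\alpha D+(Nm-\alpha)B$; all of these identities hold. However, you take a genuinely different and much heavier route than the paper. The paper never forms a closed expression for $\Pi$: it applies the product rule to $\Pi+Fw=\bigl((L+L_g)q+g\bigr)(p-mw)$, notes that the price terms already compare the right way because $\partial p/\partial g>0\geq\partial p/\partial L_g$ by Theorem~\ref{th:price} while $(L+L_g)q+g\geq 0$, and then reduces the quantity terms to the one-line comparison $\frac{\partial}{\partial g}\bigl((L+L_g)q+g\bigr)=Nm/D$ versus $\frac{\partial}{\partial L_g}\bigl((L+L_g)q+g\bigr)=(1-\tau)/D$, settled by $Nm>\alpha>1-\tau$. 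Your computation buys self-containedness (it does not lean on the signs established in Theorems~\ref{th:cons} and~\ref{th:price}) and an explicit rational expression for the difference, at the cost of substantial bookkeeping; it is worth noting that both arguments invoke \eqref{alpgt1tau} for exactly the same purpose, namely to secure $Nm>1-\tau$, so either proof in fact establishes the result under that weaker hypothesis. One small point: your strictness step at the end (excluding $F=g=L_g=0$) appeals to $L>0$, which the paper itself asserts only under the same implicit non-degeneracy, so you are no worse off than the original there.
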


\section{Conclusion and discussion}
We have devised a simple non-Walrasian general equilibrium model and investigated analytically how the macro-economic policies (the government purchase, tax-rate operation, and government employment) affect the private employment, consumption per capita, and profit. While the model shares many similar conclusions with the standard IS-LM or AD-AS model for the effects of the government expenditure and tax on these macroeconomic variables, it clearly shows the difference between the government purchase and the government employment.

Of particular importance is that the government purchase is inflationary in the sense that additional effective demand created by the government purchase not only increases the private employment as shown in \eqref{dLdg} of Theorem \ref{th:empl} but also raises the price as shown in \eqref{dpdg} of Theorem \ref{th:price}. Therefore, relatively large amounts of additional government purchase are required to achieve full employment. Moreover, the inflation caused in this way reduces the consumption per capita as shown in \eqref{dqdg} of Theorem \ref{th:cons}. 

On the other hand, the government employment is {\it not} inflationary as shown in \eqref{dpdLg} of Theorem \ref{th:empl}. The government employment is free from the tradeoff between inflation and unemployment and achieves full employment directly. This has the welfare advantage of not reducing the consumption per capita as in \eqref{dqdLg} of Theorem \ref{th:cons}.

\section{Appendix}\label{sec:appendix}
\subsection{Proof of Theorem \ref{th:empl}}\label{subsec:thL}
\begin{proof}
From \eqref{solvedL}, we see that
\begin{align}
&\frac{\partial L}{\partial g} = \frac{N^2m^2}{\alpha+(Nm-\alpha)\tau}>0,\label{dLdgform}\\
&\frac{\partial L}{\partial\tau} = -\frac{N^2m^2\left(L_g+(Nm-\alpha)g+NF\right)}{\left(\alpha+(Nm-\alpha)\tau\right)^2} < 0,\nonumber
\end{align}
and
\begin{equation}\label{dLdLgN1mt}
\frac{\partial L}{\partial L_g} = \frac{N(1-\tau)m}{\alpha+(Nm-\alpha)\tau} > 0.
\end{equation}
\end{proof}

\subsection{Proof of Theorem \ref{th:cons}}\label{subsec:thq}
\begin{proof}
From \eqref{eqq}, we have
\begin{equation}\label{dgdq1tNmalp1t}
\begin{aligned}
&\frac{\partial q}{\partial g} = 
\frac{1-\tau}{Nm+\alpha(1-\tau)}\frac{\alpha\left(-L-L_g+g\frac{\partial L}{\partial g}\right)}{\left(L+L_g\right)^2}.
\end{aligned}
\end{equation}
From \eqref{solvedL} and \eqref{dLdgform}, we see that
\begin{align}
&-L-L_g+g\frac{\partial L}{\partial g}\\
&= \frac{-N(1-\tau)\left\{mL_g+\alpha F\right\}-N^2mF-L_g(\alpha+(Nm-\alpha)\tau)}{\alpha+(Nm-\alpha)\tau} \\
&<0. \label{mLmLgpgdLdgneg}
\end{align}
It follows from \eqref{mLmLgpgdLdgneg} and \eqref{dgdq1tNmalp1t} that $\frac{\partial q}{\partial g} < 0$.

From \eqref{eqq} and \eqref{dLdt}, we have
\begin{align}
\frac{\partial}{\partial \tau}\ln q
& = \left(\frac{1}{L+L_g-\alpha g}-\frac{1}{L+L_g}\right)\frac{\partial L}{\partial\tau}
\\
&\hspace{5mm} -\frac{Nm}{(1-\tau)(Nm+\alpha(1-\tau))}  < 0 \label{1LLgmalgm1LLgmnm1mtNm}
\end{align}
It follows from \eqref{1LLgmalgm1LLgmnm1mtNm} that $\frac{\partial q}{\partial\tau} < 0$ because $q>0$ and $\frac{\partial}{\partial \tau}\ln q=\frac{1}{q}\frac{\partial q}{\partial \tau}$.

From \eqref{eqq}, we see that
\begin{equation}\label{dqdLg1mtNmal1talgdLdLgp1}
\begin{aligned}
\frac{\partial q}{\partial L_g}
&= \frac{1-\tau}{Nm+\alpha(1-\tau)}\frac{\alpha g\left(\frac{\partial L}{\partial L_g}+1\right)}{\left(L+L_g\right)^2}.
\end{aligned}
\end{equation}
It follows from \eqref{dLdLg} and \eqref{dqdLg1mtNmal1talgdLdLgp1} that $\frac{\partial q}{\partial L_g} \geq 0$ in which the equality holds if and only if $g=0$.

\end{proof}

\subsection{Proof of Theorem \ref{th:price}}\label{subsec:thp}
\begin{proof}
From \eqref{p1}, we see that
\begin{equation}\label{dpdgalpmwalpytNmdLdgqLLgLLgalg2}
\frac{\partial p}{\partial g} 
= \alpha\left[mw+\frac{\alpha(y-t)}{N}\right]
\frac{L+L_g-g\frac{\partial L}{\partial g}}{\left(L+L_g-\alpha g\right)^2}.
\end{equation}
From \eqref{mLmLgpgdLdgneg}, we obtain
\begin{equation}\label{Nmal1tLgFNNm1talpalpNmalptpos}
L+L_g-g\frac{\partial L}{\partial g} > 0. 
\end{equation}
It follows from \eqref{dpdgalpmwalpytNmdLdgqLLgLLgalg2} and \eqref{Nmal1tLgFNNm1talpalpNmalptpos} that $\frac{\partial p}{\partial g} > 0$. 

From \eqref{p2}, we see that
\begin{equation}\label{dpdtprep}
\begin{aligned}
&\frac{\partial p}{\partial\tau} &= 
mw\alpha\frac{-g\frac{\partial L}{\partial \tau}+(L+L_g)^2\frac{\partial q}{\partial\tau}}{\left[\left(L+L_g\right)-\alpha\left(\left(L+L_g\right)q+g\right)\right]^2}.
\end{aligned}
\end{equation}
We see from \eqref{L} that 
\begin{equation}\label{dLdtNmLgmgFLg1Nmq2dqdt}
\begin{aligned}
\frac{\partial L}{\partial\tau}
= \frac{Nm(L_g+(mg+F)N)}{(1-Nmq)^2}\frac{\partial  q}{\partial\tau},
\end{aligned}
\end{equation}
and 
\begin{equation}\label{LLg2NmgFLg1mNmq2}
(L+L_g)^2
= \frac{\left(N(mg+F)+L_g\right)^2}{\left(1-Nmq\right)^2}.
\end{equation}
From \eqref{dLdtNmLgmgFLg1Nmq2dqdt} and \eqref{LLg2NmgFLg1mNmq2}, we have
\begin{equation}\label{migdLdtpLLLg}
\begin{aligned}
-g\frac{\partial L}{\partial \tau}+(L+L_g)^2\frac{\partial q}{\partial\tau}
&= \frac{\left(N(mg+F)+L_g\right)\left(NF+L_g\right)}{(1-Nmq)^2} \frac{\partial q}{\partial \tau}.
\end{aligned}
\end{equation}
It follows from \eqref{dqdt}, \eqref{dpdtprep}, and \eqref{migdLdtpLLLg} that $\frac{\partial p}{\partial \tau} < 0$.

From \eqref{p1}, we see that
\begin{equation}\label{dpdLgmwalpw1tNmalpgdLdLg1LLgalg}
\frac{\partial p}{\partial L_g}
= \left[mw+\frac{\alpha (y-t)}{N}\right]
\frac{-\alpha g\left(\frac{\partial L}{\partial L_g}+1\right)}{\left(L+L_g-\alpha g\right)^2}.
\end{equation}
It follows from \eqref{dLdLg} and \eqref{dpdLgmwalpw1tNmalpgdLdLg1LLgalg} that $\frac{\partial p}{\partial L_g}\leq 0$ in which the equality holds if and only if $g=0$.  
\end{proof}

\subsection{Proof of Theorem \ref{th:profit}}\label{subsec:thPigt}
\begin{proof}
From \eqref{Profit}, we see that
\begin{equation}\label{dPidgdLLgq1dgpmwLLgqgdpdg}
\frac{\partial\Pi}{\partial g}
= \left(\frac{\partial}{\partial g}\left(\left(L+L_g\right)q\right)+1\right)(p-mw)+\left((L+L_g)q+g\right)\frac{\partial p}{\partial g}.
\end{equation}
From \eqref{eqq}, we see that
\begin{equation}\label{LplusLgq}
\left(L+L_g\right)q=\frac{(1-\tau)\left(L+L_g-\alpha g\right)}{Nm+\alpha(1-\tau)}.
\end{equation}
By using \eqref{LplusLgq} with \eqref{dLdgform}, we obtain 
\begin{equation}\label{ddgLLgq}
\begin{aligned}
\frac{\partial}{\partial g}\left(\left(L+L_g\right)q\right)
&= \frac{(1-\tau)(Nm-\alpha)}{\alpha+(Nm-\alpha)\tau} > 0.
\end{aligned}
\end{equation}
It follows from \eqref{p-mw>0}, \eqref{dpdg}, and \eqref{ddgLLgq} that $\frac{\partial \Pi}{\partial g}>0$.

From \eqref{Profit}, we see that
\begin{equation} \label{dPidtpre}
\frac{\partial \Pi}{\partial\tau} 
= \frac{\partial}{\partial\tau}\left(\left(L+L_g\right)q\right)\cdot(p-mw)+\left((L+L_g)q+g\right)\frac{\partial p}{\partial\tau}.
\end{equation}
It is easy to see that
\begin{equation}\label{ddtLLgq}
\frac{\partial}{\partial\tau}\left(\left(L+L_g\right)q\right)
=\frac{\partial L}{\partial\tau}q+(L+L_g)\frac{\partial q}{\partial \tau} < 0
\end{equation}
because of \eqref{dLdt} and \eqref{dqdt}. It follows from \eqref{p-mw>0}, \eqref{dpdt}, \eqref{dPidtpre}, and \eqref{ddtLLgq} that $\frac{\partial \Pi}{\partial \tau}<0$.
\end{proof}

\subsection{Proof of Theorem \ref{th:profitLg}}\label{subsec:thprLgLam}
\begin{proof}
From \eqref{pmwresult}, we can rewrite the profit \eqref{Profit} as
\[
\Pi = \frac{\alpha wm\left\{\left(L+L_g\right)q+g\right\}^2}{L+L_g-\alpha\left(\left(L+L_g\right)q+g\right)} -Fw,
\]
and a careful calculation shows that 
\begin{equation}\label{Pirewritten}
\begin{aligned}
\frac{\partial\Pi}{\partial L_g}
&= \frac{\alpha mw\left\{(L+L_g)q+g\right\}\cl{D}}{\left[L+L_g-\alpha\left(\left(L+L_g\right)q+g\right)\right]^2},
\end{aligned}
\end{equation}
where 
\begin{equation}\label{clDdef}
\begin{aligned}
\cl{D} &:= 2(L+L_g)\frac{\partial}{\partial L_g}\left((L+L_g)q\right)\\
&\hspace{3mm}-\alpha \left(\left(L+L_g\right)q+g\right)\frac{\partial}{\partial L_g}\left((L+L_g)q\right)\\
&\hspace{6mm}-\left(\left(L+L_g\right)q+g\right)\frac{\partial}{\partial L_g}\left(L+L_g\right).
\end{aligned}
\end{equation}
By using \eqref{LplusLgq}, we see that 
\begin{equation}\label{parLLgqLg1tNmalp1tau}
\frac{\partial}{\partial L_g}\left(L+L_g\right)
=
\frac{Nm+\alpha(1-\tau)}{1-\tau}
\frac{\partial}{\partial L_g}\left((L+L_g)q\right).
\end{equation}
Applying \eqref{parLLgqLg1tNmalp1tau} to the third term in \eqref{clDdef}, we obtain
\begin{equation}\label{DEpsilon}
\begin{aligned}
\cl{D}&=
{\cl E} \frac{\partial}{\partial L_g}\left((L+L_g)q\right),
\end{aligned}
\end{equation}
where 
\begin{equation}\label{Epsilon}
{\cl E}:= 
2(L+L_g)
-\left((L+L_g)q+g\right)\left(\alpha + \frac{Nm+\alpha(1-\tau)}{1-\tau}\right)
\end{equation}
By using \eqref{dLdLg} and \eqref{dqdLg} we have 
\[
\begin{aligned}
\frac{\partial}{\partial L_g}\left((L+L_g)q\right) > 0
\end{aligned}
\]
Therefore, the sign of ${\cl D}$ is determined by the sign of ${\cl E}$ from \eqref{DEpsilon}. Applying \eqref{LplusLgq} to \eqref{Epsilon}, we obtain 
\begin{equation}\label{clENmclFNmal1tau}
{\cl E} = \frac{Nm\cl{F}}{Nm+\alpha(1-\tau)},
\end{equation}
where ${\cl F}$ is defined by
\begin{equation}\label{clF}
{\cl F} := L+L_g-g\left(2\alpha+\frac{Nm}{1-\tau}\right)
\end{equation}
It follows from \eqref{clENmclFNmal1tau} that  the sign of ${\cl E}$ is determined by ${\cl F}$. By using \eqref{L} and \eqref{clF}, we see that
\begin{equation}\label{clFLgmLmabda}
{\cl F} = \frac{L_g-\Lambda}{1-Nmq},
\end{equation}
where $\Lambda$ is given by \eqref{thrLambda}. Because of \eqref{1mNmqgt0} and \eqref{clFLgmLmabda}, the sign of $\cl F$, and thus $\frac{\partial\Pi}{\partial L_g}$ is determined by $L_g-\Lambda$. This immediately completes the proof.
\end{proof}

\subsection{Proof of Theoerm \ref{th:profitgorLg}}\label{subsec:thPigPiLg}
\begin{proof}
It is easy from \eqref{Profit} to see that
\begin{equation}\label{dPidgddgLLgqg}
\frac{\partial \Pi}{\partial g}
= \frac{\partial}{\partial g}\left(\left(L+L_g\right)q + g\right)(p-mw)
+ \left(\left(L+L_g\right)q + g\right)\frac{\partial p}{\partial g}
\end{equation}
and 
\begin{equation}\label{dPidLgddLgLLgqgpmw}
\frac{\partial \Pi}{\partial L_g}
= \frac{\partial}{\partial L_g}\left(\left(L+L_g\right)q + g\right)(p-mw)
+ \left(\left(L+L_g\right)q + g\right)\frac{\partial p}{\partial L_g}
\end{equation}
It is obvious from \eqref{dpdg} and \eqref{dpdLg} that 
\begin{equation}\label{dpdg>dpdLg}
\frac{\partial p}{\partial g} > \frac{\partial p}{\partial L_g}.
\end{equation}
From \eqref{dPidgddgLLgqg}, \eqref{dPidLgddLgLLgqgpmw}, and \eqref{dpdg>dpdLg}, if
\begin{equation}\label{ddgLLgqggtddLgLLgqg}
\frac{\partial}{\partial g}\left(\left(L+L_g\right)q + g\right)  > \frac{\partial}{\partial L_g}\left(\left(L+L_g\right)q + g\right),
\end{equation}
then we obtain $\frac{\partial \Pi}{\partial g} > \frac{\partial \Pi}{\partial L_g}$.

By using \eqref{Nmbtalp} and \eqref{ddgLLgq}, we have
\begin{equation}\label{dLLgqgdgNmalNmalt}
\frac{\partial}{\partial g}\left(\left(L+L_g\right)q + g\right)
= \frac{Nm}{\alpha + (Nm-\alpha)\tau} \geq 0,
\end{equation}
and by using \eqref{Nmbtalp}, \eqref{dLdLgN1mt}, and \eqref{parLLgqLg1tNmalp1tau}, we have
\begin{equation}\label{dLgqgdg1mtalNmalt}
\frac{\partial}{\partial L_g}\left(\left(L+L_g\right)q + g\right)
= \frac{1-\tau}{\alpha + (Nm-\alpha)\tau} \geq 0.
\end{equation}
Thus, \eqref{Nmbtalp}, \eqref{alpgt1tau}, \eqref{dLLgqgdgNmalNmalt}, and \eqref{dLgqgdg1mtalNmalt} imply \eqref{ddgLLgqggtddLgLLgqg} which completes the proof.
\end{proof}

\vspace{5mm}
\noindent
{\bf \Large Acknowledgments}\vspace{4mm} \\
\noindent
The author would like to express sincere gratitude to the anonymous reviewers for their valuable comments. In particular, one of the reviewers pointed out an unnecessary assumption in Proposition \ref{prop:nonnegativeprofit} and also provided a comment that led the author to formulate Theorem \ref{th:profitgorLg}.

\bibliographystyle{aer}
\ifx\undefined\bysame
\newcommand{\bysame}{\leavevmode\hbox to\leftmargin{\hrulefill\,\,}}
\fi

\end{document}